\DeclareMathAlphabet{\pazocal}{OMS}{zplm}{m}{n}
\newtheorem{theorem}{Theorem} 
\newtheorem{corollary}{Corollary}
\theoremstyle{definition} 
\newtheorem{example}{Example}
\theoremstyle{remark}    
\begin{document}
\title{Mutually unbiased measurements-induced lower bounds of concurrence}
\author{Yu Lu$^{1}$}%
\author{Meng Su$^{1}$}%
\author{Zhong-Xi Shen$^{2}$}
\author{Hong-Xing Wu$^{1, 3}$}
\author{Shao-Ming Fei$^{1}$}
\author{Zhi-Xi Wang$^{1}$}
\affiliation{%
$^{1}$School of Mathematical Sciences, Capital Normal University, Beijing 100048, China\\
$^{2}$School of Mathematics and Science, Nanyang Institute of Technology, Nanyang, Henan 473004, China\\
$^{3}$School of Mathematics and Computational Science, Shangrao Normal University, Shangrao 334001, China}

\begin{abstract}
We propose a family of lower bounds for concurrence in quantum systems using mutually unbiased measurements, which prove more effective in entanglement estimation compared to existing methods. Through analytical and numerical examples, we demonstrate that these bounds outperform conventional approaches, particularly in capturing finer entanglement features. Additionally, we introduce separability criterions based on MUMs for arbitrary $d$-dimensional bipartite systems, the research results show that our criterion has more advantages than the existing criteria.

\end{abstract}

\keywords{Concurrence \and Mutually unbiased measurements \and Separability criterions \and Quantum entanglement}

\maketitle

\section{Introduction}
Quantum entanglement, as one of the most profound features of quantum mechanics,  plays a pivotal role in quantum information 
processing tasks such as quantum cryptography, teleportation, and computing \cite{Horodecki2009}. An important issue in the 
theory of quantum entanglement is the quantification and estimation of entanglement for composite systems.  
The concurrence \cite{rungta2001universal} is one of the most widely used entanglement measures in quantum information theory. 
However, due to the infimum involved in its calculation, analytical expressions for concurrence are only available for certain 
specific quantum states. This limitation has motivated extensive research efforts to establish reliable lower bounds of 
concurrence for general bipartite quantum states 
\cite{peres1996separability,horodecki1996necessary,chen2002matrix,rudolph2005further,chen2005concurrence}. 
Recently, symmetric measurements, including Mutually unbiased measurements (MUMs) and general symmetric informationally 
complete positive operator valued measurements (GSIC -POVMs),  presenting tighter bounds by exploiting their conical 2-design 
properties \cite{wang2025symmetric}.%

Another important problem proposed is how to distinguish quantum entangled states from the states without entanglement, i.e., separable states.
Over time, various separability criteria have emerged, such as the positive partial transposition criterion \cite{p96,ho96,ho97}, the realignment criterion \cite{r1,chen2002matrix,r3,r4,r5}, the covariance matrix criterion \cite{c1}, and the correlation matrix criterion \cite{c2,c3}.

MUMs have emerged as powerful tools for entanglement detection.

 A complete set of \(d+1\) MUMs can always be constructed for arbitrary dimensions \cite{ka14}. This advantage enables MUMs-based criteria to outperform mutually unbiased bases (MUBs) methods in detecting entanglement, particularly for isotropic states \cite{Chen2014}. Concurrently, symmetric informationally complete positive operator-valued measurements (SIC-POVMs) and their generalized versions offer alternative frameworks for entanglement detection \cite{Gour2014}. These symmetric measurements not only simplify criteria but also enhance experimental accessibility \cite{Shang2018}.

In this paper, based on MUMs, we derive the lower bound of concurrence, and deeply explore the separability problem by means of this measurement method. In addition, a separability criterion applicable to two-qudit systems (two-qubit systems) is proposed.

\section{Preliminary}

In quantum information theory, MUBs serve as fundamental mathematical tools. Let
$\mathcal{B}_{1}=\{|b_{i}\rangle\}_{i=1}^{d}$ and
$\mathcal{B}_{2}=\{|c_{j}\rangle\}_{j=1}^{d}$ be two orthonormal bases in $\mathbb{C}^{d}.$
They are called mutually unbiased if they satisfy:
$$
|\langle b_{i}|c_{j}\rangle|=\frac{1}{\sqrt{d}} ,~~~\forall\, i,j=1,2,\cdots,d.
$$
A set of orthonormal bases $\{\mathcal{B}_{1},
\mathcal{B}_{2},\cdots,\mathcal{B}_{m}\}$ in $\mathbb{C}^{d}$ is characterized as a set of mutually unbiased bases if any two distinct bases within this collection are mutually unbiased. Since each $\mathcal{B}_{k}$ can be
written as $d$ rank one projectors summing to  the identity
operator, these MUBs correspond to $m$ projective measurements, formulated as positive operator-valued measurements (POVMs), on a quantum system of dimension $d$. Consequently, if a physical system is prepared in
an eigenstate of basis $\mathcal{B}_{k}$ and measured in basis
$\mathcal{B}_{k'}$, then all the measurement outcomes are equally probable.

In Ref.\cite{ka14}, the authors introduced the concept of MUMs. Two POVMs on $\mathbb{C}^{d}$, denoted by
$\pazocal{P}^{(b)}=\{P_{n}^{(b)}\}_{n=1}^{d}$, $b=1,2$, are said to be
mutually unbiased measurements if
\begin{equation}\label{eq1}
\begin{split}
\mathrm{Tr}(P_{n}^{(b)})&=1,\\
\mathrm{Tr}(P_{n}^{(b)}P_{n'}^{(b')})&=\frac{1}{d},~~~b\neq b',\\
\mathrm{Tr}(P_{n}^{(b)}P_{n'}^{(b')})&=\delta_{n,n'}\,\kappa+(1-\delta_{n,n'})\frac{1-\kappa}{d-1},
\end{split}
\end{equation}
where $\frac{1}{d} < \kappa \leq 1$, and $\kappa=1$ if and only if all
$P_{n}^{(b)}$'s are rank one projectors, i.e., $\pazocal{P}^{(1)}$
and $\pazocal{P}^{(2)}$ correspond to MUBs.

A general construction of $d+1$ MUMs was introduced in Ref.\cite{ka14}.
Consider a collection of $d^2 - 1$ Hermitian, traceless operators $\{F_{n,b}:n=1,2,\cdots,d-1, b=1,2,\cdots,d+1\}$ acting on $\mathbb{C}^d $, which satisfy the orthogonality condition
 $\mathrm{\rm Tr}(F_{n,b}F_{n',b'})=\delta_{n,n'}\delta_{b,b'}$.
Define $d(d+1)$ operators
\begin{equation}\label{eq2}
F_{n}^{(b)}=
\begin{cases}
   F^{(b)}-(d+\sqrt{d})F_{n,b},&n=1,2,\cdots,d-1;\\[2mm]
   (1+\sqrt{d})F^{(b)},&n=d,
\end{cases}
\end{equation}
where $F^{(b)}=\sum_{n=1}^{d-1}F_{n,b}$, $b=1,2,\cdots,d+1.$  Then $d+1$ MUMs can be explicitly constructed \cite{ka14}:
\begin{equation}\label{eq3}
P_{n}^{(b)}=\frac{1}{d}I+tF_{n}^{(b)},
\end{equation}
with $b=1,2,\cdots,d+1, n=1,2,\cdots,d,$ and the value of $t$ must be selected to ensure that $P_{n}^{(b)} \geq 0$. The condition specifies that the parameter
$t$ must satisfy the inequality
\begin{equation}\label{tIneq}
-\frac1{d}\frac1{\lambda_{\rm max}}\leq t\leq\frac1{d}\frac1{|\lambda_{\rm min}|}
\end{equation}
where $\lambda_{\rm min} = \min_b\lambda_{\rm min}^{(b)}$, $\lambda_{\rm max} = \max_b\lambda_{\rm max}^{(b)}$, and $\lambda_{\rm min}^{(b)}$ and $\lambda_{\rm max}^{(b)}$ the largest positive and smallest negative eigenvalues, respectively,  across all operators $F_n^{(b)}$ from Eq.(\ref{eq2}) for $n = 1,\ldots,d$.


Associated with the construction of MUMs (\ref{eq3}), the parameter $\kappa$ is given by
\begin{equation}\label{Ek}
\kappa=\frac{1}{d}+t^{2}(1+\sqrt{d})^{2}(d-1).
\end{equation}
If the operators $F_{n,b}$ are taken to be the generalized Gell-Mann operator basis,
then $\kappa=\frac{1}{d} + \frac{2}{d^{2}}$
is the optimally \cite{ka14}. That is to say, for any $d$, there always exist $d+1$ MUMs, which contrasts with MUBs, where this is only possible for prime power dimensions $d$. However, if the parameter $\kappa$ is fixed, the two values of $t$ may not ensure  the
existence of operator basis such that $P_{n}^{(b)}$'s are positive.
For instance, if $\kappa=1$ and there exist $d+1$ MUMs, then
it must form a complete set of MUBs.  Yet, it remains unknown whether
$d+1$ MUBs exist when $d$ is not a prime power.


\section{Mutually unbiased measurements based lower bounds of concurrence}

Let $\mathcal{H}_{A}$ and $\mathcal{H}_{B}$ be $d$-dimensional complex Hilbert space. The concurrence of a pure state $\ket{\psi} \in \mathcal{H}_{A}\otimes \mathcal{H}_{B}$ is defined by
\begin{equation*}
	C(\ket{\psi}) = \sqrt{2(1-{\rm Tr}(\rho_{A}^{2}))} = \sqrt{2(1-{\rm Tr}(\rho_{B}^{2}))},
\end{equation*}
where $\rho_{A(B)} = {\rm Tr}_{B(A)}(\ket{\psi}\bra{\psi})$ denotes the reduced density state obtained by tracing out the second subsystem. The concurrence is extended to mixed states $\rho$ through the convex roof extension,
\begin{equation*} C(\rho)=\min\limits_{\{p_{i},\ket{\psi_{i}}\}}\sum\limits_{i}p_{i}C(\ket{\psi_{i}}),
\end{equation*}
where the minimum is determined all possible pure state decompositions of $\rho = \sum\limits_{i}p_{i}\ket{\psi_{i}}\bra{\psi_{i}}$, with the conditions $p_{i}\geq 0$ and $\sum\limits_{i}p_{i}=1$.

\begin{theorem}
Let $\rho$ is a density matrix in $\mathbb{C}^{d}\bigotimes\mathbb{C}^{d}$,
consider two sets of measurements, $\{\pazocal{P}^{(b)}\}_{b = 1}^{d + 1}$ and $\{\pazocal{P}^{(b')}\}_{b' = 1}^{d + 1}$, each consisting of $d + 1$ MUMs on $\mathbb{C}^{d}$, two sets of measurements, sharing the same parameter $\kappa$. Specifically, $\pazocal{P}^{(b)} = \{P_{n}^{(b)}\}_{n = 1}^{d}$ and $\pazocal{P}^{(b')} = \{P_{n'}^{(b')}\}_{n' = 1}^{d}$, for $b,b' = 1,2,\cdots,d + 1$, let $\{\ket{\omega_{n,b}}|n = 1,\cdots,d; b = 1,\cdots,d + 1\}$ be an orthonormal basis of the vector space $\mathbb{C}^{d(d + 1)}$. We then define the operator $\mathcal{J}(\rho)$ as follows:

\[
\mathcal{J}(\rho)=\sum_{n,n' = 1}^{d}\sum_{b,b' = 1}^{d + 1}\text{\rm Tr}\left(\rho\left(P_{n}^{(b)}\otimes P_{n'}^{(b')}\right)\right)\ket{\omega_{n,b}}\bra{\omega_{n',b'}}.
\]

The concurrence $C(\rho)$ satisfies
	\begin{equation}\label{eq4}
		C(\rho) \geq \sqrt{\frac{2(d-1)}{d(\kappa d-1)}}(\|\mathcal{J}(\rho)\|_{\rm Tr} - (1+\kappa)).
	\end{equation}
\end{theorem}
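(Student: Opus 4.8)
The plan is to reduce the mixed-state bound to the pure-state case via the convex roof, establish the inequality $C(|\psi\rangle) \ge \sqrt{\tfrac{2(d-1)}{d(\kappa d-1)}}(\|\mathcal{J}(|\psi\rangle\langle\psi|)\|_{\rm Tr} - (1+\kappa))$ for every pure state, and then extend to mixtures by subadditivity of the trace norm. First I would analyze the structure of $\mathcal{J}(\rho)$. The key observation is that the MUMs form a conical $2$-design: the operator $\sum_{n,b} P_n^{(b)} \otimes P_n^{(b)}$ (or the analogous resolution) can be written as a linear combination of the identity $I\otimes I$ and the swap operator $\mathbb{F}$, with coefficients depending only on $d$ and $\kappa$. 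Concretely, using $\mathrm{Tr}(P_n^{(b)}P_{n'}^{(b')})$ from Eq.~(\ref{eq1}), one computes that $\sum_{n,b} |P_n^{(b)}\rangle\rangle\langle\langle P_n^{(b)}|$ (in the vectorization picture) equals $\alpha\, |I\rangle\rangle\langle\langle I| + \beta\, \mathbb{F}$ for explicit $\alpha,\beta$; this is precisely what controls the entries of $\mathcal{J}(\rho)$.

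Next I would compute $\|\mathcal{J}(|\psi\rangle\langle\psi|)\|_{\rm Tr}$ for a pure state. Writing $|\psi\rangle = \sum_i \sqrt{\lambda_i}\,|ii\rangle$ in Schmidt form, the matrix elements $\mathrm{Tr}(\rho\,(P_n^{(b)}\otimes P_{n'}^{(b')}))$ become bilinear in the $P$'s, and using the $2$-design property $\mathcal{J}(|\psi\rangle\langle\psi|)$ should split into a rank-one ``background'' piece plus a term proportional to $\rho_A$ (or its transpose) conjugated into the $\omega$-basis. The upshot is a bound of the shape $\|\mathcal{J}(|\psi\rangle\langle\psi|)\|_{\rm Tr} \le (1+\kappa) + c_d\sqrt{1 - \mathrm{Tr}(\rho_A^2)}$, or more precisely an inequality relating $\|\mathcal{J}\|_{\rm Tr} - (1+\kappa)$ to $\sum_{i<j}\sqrt{\lambda_i\lambda_j}$ — the same combination that appears in the standard lower bound $C(|\psi\rangle) = \sqrt{2\sum_{i<j}\lambda_i\lambda_j} \ge$ (something). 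I would use the elementary inequality $\|M\|_{\rm Tr} \ge \sum |M_{kk}| \ge$ (diagonal/off-diagonal estimates) or a singular-value argument to lower-bound $C(|\psi\rangle)$ by the trace-norm expression, matching the constant $\sqrt{\tfrac{2(d-1)}{d(\kappa d-1)}}$.

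Finally, the extension to mixed states: for any decomposition $\rho = \sum_i p_i |\psi_i\rangle\langle\psi_i|$, linearity of $\mathcal{J}$ gives $\mathcal{J}(\rho) = \sum_i p_i \mathcal{J}(|\psi_i\rangle\langle\psi_i|)$, hence $\|\mathcal{J}(\rho)\|_{\rm Tr} \le \sum_i p_i \|\mathcal{J}(|\psi_i\rangle\langle\psi_i|)\|_{\rm Tr} \le \sum_i p_i\big[(1+\kappa) + \sqrt{\tfrac{d(\kappa d-1)}{2(d-1)}}\,C(|\psi_i\rangle)\big]$, and taking the infimum over decompositions yields $\|\mathcal{J}(\rho)\|_{\rm Tr} - (1+\kappa) \le \sqrt{\tfrac{d(\kappa d-1)}{2(d-1)}}\,C(\rho)$, which rearranges to Eq.~(\ref{eq4}). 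The main obstacle I anticipate is the middle step: correctly isolating the $\rho_A$-dependent part of $\mathcal{J}(|\psi\rangle\langle\psi|)$ after using the conical $2$-design identity, and then bounding its trace norm sharply enough to recover the stated constant rather than a weaker one — this requires carefully tracking how the swap operator $\mathbb{F}$ acts on $|\psi\rangle\langle\psi|$ (producing $\rho_A$ up to partial transpose) and choosing the orthonormal basis $\{|\omega_{n,b}\rangle\}$ so that the cross terms are controlled. The normalization bookkeeping with $\kappa$ and the factor $(1+\sqrt d)$ from Eq.~(\ref{eq2}) is where sign/constant errors are most likely to creep in.
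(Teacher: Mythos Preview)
Your overall architecture matches the paper exactly: reduce to pure states via the convex roof, handle a single pure state using the $2$-design identity for MUMs, then pass back to mixtures by subadditivity of the trace norm. Your final paragraph is essentially the paper's closing argument verbatim.

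Where the paper differs from your sketch is the middle step, and here it is considerably cleaner than what you propose. You plan to bound $\|\mathcal{J}(|\psi\rangle\langle\psi|)\|_{\rm Tr}$ above by decomposing into a rank-one background plus a $\rho_A$-piece and then estimating; the paper instead computes the trace norm \emph{exactly}. Writing $|\psi\rangle=\sum_i\lambda_i|\varphi_i\rangle|\phi_i\rangle$ (with $\sum_i\lambda_i^2=1$), one has
\[
\mathcal{J}(|\psi\rangle\langle\psi|)=\sum_{i,j}\lambda_i\lambda_j\,|\mu_{ij}\rangle\langle\overline{\nu}_{ij}|,\qquad
|\mu_{ij}\rangle=\sum_{n,b}\langle\varphi_j|F_n^{(b)}|\varphi_i\rangle\,|\omega_{n,b}\rangle,
\]
and similarly for $|\nu_{ij}\rangle$ in terms of the $|\phi_i\rangle$. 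The $2$-design identity $\sum_{n,b}F_n^{(b)}\otimes F_n^{(b)}=(1+\tfrac{1-\kappa}{d-1})\,\mathbb{I}+\tfrac{\kappa d-1}{d-1}\,\mathbb{F}$ is not used to split $\mathcal{J}$ into pieces, but to evaluate the Gram matrices: one finds $\langle\mu_{ij}|\mu_{i'j'}\rangle=\langle\overline{\nu}_{ij}|\overline{\nu}_{i'j'}\rangle=\tfrac{\kappa d-1}{d-1}\delta_{ii'}\delta_{jj'}+(1+\tfrac{1-\kappa}{d-1})\delta_{ij}\delta_{i'j'}$. Equal Gram matrices mean the two families are related by a unitary $U$, so $\|\mathcal{J}(|\psi\rangle\langle\psi|)\|_{\rm Tr}=\|\sum_{ij}\lambda_i\lambda_j|\mu_{ij}\rangle\langle\mu_{ij}|\,U^\dagger\|_{\rm Tr}=\mathrm{Tr}\!\big(\sum_{ij}\lambda_i\lambda_j|\mu_{ij}\rangle\langle\mu_{ij}|\big)$, and this trace is read off directly from the diagonal Gram entries:
\[
\|\mathcal{J}(|\psi\rangle\langle\psi|)\|_{\rm Tr}=\frac{2(\kappa d-1)}{d-1}\sum_{i<j}\lambda_i\lambda_j+(1+\kappa).
\]
The only inequality then needed is the Chen--Albeverio--Fei bound $C(|\psi\rangle)\ge 2\sqrt{\tfrac{2}{d(d-1)}}\sum_{i<j}\lambda_i\lambda_j$, which immediately gives the stated constant.

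One concrete slip in your sketch: the inequality $\|M\|_{\rm Tr}\ge\sum_k|M_{kk}|$ points the wrong way for what you need (an \emph{upper} bound on $\|\mathcal{J}\|_{\rm Tr}$), so that particular tool would not close the argument. The paper sidesteps any such estimate by achieving equality, which is both simpler and what guarantees the sharp constant $\sqrt{2(d-1)/d(\kappa d-1)}$.
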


\begin{proof}
Consider the optimal pure state decomposition
 $\{p_i,\ket{\psi_i}\}$ of $\rho$ that satisfies $C(\rho) = \sum\limits_{i}p_{i}C(\ket{\psi_i})$.  Because of the convex property of the trace norm, which ensures that $\sum\limits_{i}p_{i}\|\mathcal{J}(\ket{\psi_{i}}\bra{\psi_{i}})\|_{\rm Tr}\geq \|\mathcal{J}(\rho)\|_{\rm Tr},$  it suffices to prove the theorem for pure states $\ket{\psi} \in \mathcal{H}_{A}\otimes \mathcal{H}_{B}$,
\begin{equation}\label{eq5}
C(\ket{\psi}) \geq \sqrt{\frac{2(d-1)}{d(\kappa d-1)}}(\|\mathcal{J}{\ket{\psi}\bra{\psi}}\|_{\rm Tr} - (1+\kappa)).
\end{equation}
Using the Schmidt decomposition, we can find orthonormal bases  $\ket{\varphi_1},\cdots,\ket{\varphi_{d}}$ and $\ket{\phi_1},\cdots,\ket{\phi_{d}}$ in $\mathcal{H}_{A}$ and $\mathcal{H}_{B}$ such that $\ket{\psi} = \sum\limits_{i = 1}^{r}\lambda_{i}\ket{\varphi_i}\otimes\ket{\phi_{i}}$, where $\lambda_{i}\geq 0$ and $\sum\limits_{i = 1}^{r}\lambda_{i}^{2} = 1$. Denote $\ket{\mu_{ij}}=\sum\limits_{n=1}^{d}\sum\limits_{b=1}^{d+1}
\braket{\varphi_{j}|F_{n}^{(b)}|\varphi_{i}}\ket{\omega_{n,b}}$ and $\ket{\nu_{ij}}=\sum\limits_{n=1}^{d}
\sum\limits_{b=1}^{d+1}\braket{\phi_{j}|F_{n}^{(b)}|\phi_{i}}\ket{w_{\omega,b}}$. Subsequently,

    \begin{equation}\label{eq6}
        \begin{split}
            &\mathcal{J}(|\psi\rangle\langle\psi|) \\&= \sum_{n,n' = 1}^{d} \sum_{b,b' = 1}^{d + 1} \sum_{i,j = 1}^{r} \lambda_{i} \lambda_{j} \langle \varphi_{j}|F_{n'}^{(b')}|\varphi_{i}\rangle \langle \phi_{j}|F_{n}^{(b)}|\phi_{i}\rangle |\psi_{n,b}\rangle\\
            &= \sum_{i = 1}^{r} \sum_{j = 1}^{r} \lambda_{i} \lambda_{j} |\mu_{ij}\rangle \langle\nu_{ij}|
        \end{split}
    \end{equation}
where $\ket{\overline{\nu}_{ij}}=\sum\limits_{n'=1}^{d}
\sum\limits_{n'=1}^{d+1}\overline{\braket{\phi_{j}|F_{n'}^{(b')}|\phi_{i}}}\ket{\omega_{n',b'}}$ with
$\overline{\braket{\phi_{j}|F_{n'}^{(b')}|\phi_{i}}}$ representing the complex conjugation of ${\braket{\phi_{j}|F_{n'}^{(b')}|\phi_{i}}}$.\par

(1) Firstly, we analyze the situation where $\ket{\mu_{ij}}=\ket{\overline{\nu}_{ij}}$ for every $i,j \in \{1,\cdots,d\}$.  Under these conditions, it is clear that $\mathcal{J}(\ket{\psi}\bra{\psi})$ is a positive semidefinite, and then
    \begin{equation}\label{eq7}
    	\|\mathcal{J}(\ket{\psi}\bra{\psi})\|_{\rm Tr}={\rm Tr}\left(\mathcal{J}(\ket{\psi}\bra{\psi})\right)
    =\sum\limits_{i=1}^{r}\sum\limits_{j=1}^{r}\lambda_{i}\lambda_{j}\braket{\mu_{ij}|\mu_{ij}}.
    \end{equation}
Define $\mathbb{F}=\sum\limits_{i=1}^{d}\sum\limits_{j=1}^{d}\ket{\varphi_{i}}\bra{\varphi_{j}}\otimes\ket{\varphi_{j}}\bra{\varphi_{i}}$. Based on the relation $\sum\limits_{n = 1}^{d}\sum\limits_{b = 1}^{d+1}F_{n}^{(b)}\otimes F_{n}^{(b)}= (1+ \dfrac{1-\kappa}{d - 1})\mathbb{I}+\dfrac{\kappa d-1}{(d-1)}\mathbb{F}$ \cite{wang2018uncertainty}, we obtain
    \begin{flalign*}
    	& \begin{array}{ll}
    		&\braket{\mu_{ij}|\mu_{i^{\prime}j^{\prime}}}
    		\hspace{-2mm} \\& = \sum\limits_{n,n' = 1}^{d}\sum\limits_{b,b' = 1}^{d+1}
    \overline{\braket{\varphi_{j}|F_{n}^{(b)}|\varphi_{i}}}\braket{\varphi_{j^{\prime}}
    |F_{n'}^{(b')}|\varphi_{i^{\prime}}}\braket{\omega_{n,b}|w_{n',b'}}\\
    &
    = \sum\limits_{n = 1}^{d}\sum\limits_{b = 1}^{d+1}\braket{\varphi_{i}|F_{n}^{(b)}
    |\varphi_{j}}\braket{\varphi_{j^{\prime}}|F_{n}^{(b)}|\varphi_{i^{\prime}}}\\
    		& = \sum\limits_{n = 1}^{d}\sum\limits_{b = 1}^{d+1}\braket{\varphi_{i}\varphi_{j^{\prime}}
    |F_{n}^{(b)}\otimes F_{n}^{(b)}|\varphi_{j}\varphi_{i^{\prime}}}\\
    &
    = \braket{\varphi_{i}\varphi_{j^{\prime}}|\sum\limits_{n = 1}^{d}\sum\limits_{b = 1}^{d+1}F_{n}^{(b)}\otimes F_{n}^{(b)}|\varphi_{j}\varphi_{i^{\prime}}}\\
    		& = \dfrac{\kappa d-1}{(d-1)}\braket{\varphi_{i}\varphi_{j^{\prime}}|\mathbb{F}|\varphi_{j}\varphi_{i^{\prime}}}+(1+ \dfrac{1-\kappa}{d - 1})\braket{\varphi_{i}\varphi_{j^{\prime}}|\mathbb{I}|\varphi_{j}\varphi_{i^{\prime}}}\\
    		& = \dfrac{\kappa d-1}{(d-1)}\delta_{ii^{\prime}}\delta_{jj^{\prime}}
    +(1 + \dfrac{1-\kappa}{d - 1})\delta_{ij}\delta_{i^{\prime}j^{\prime}}.
    	\end{array}&
    \end{flalign*}
    Therefore, $\braket{\mu_{ij}|\mu_{ij}} = \dfrac{\kappa d-1}{(d-1)}
    +(1 + \dfrac{1-\kappa}{d - 1})\delta_{ij}$ and

\begin{align}\label{eq8}
    		&\|\mathcal{J}(\ket{\psi}\bra{\psi})\|_{\rm Tr}
    		\hspace{-2mm} \nonumber\\&=\dfrac{\kappa d-1}{(d-1)}\sum\limits_{i=1}^{r}\sum\limits_{j=1}^{r}\lambda_{i}\lambda_{j}+(1+ \dfrac{1-\kappa}{d - 1})\sum\limits_{i=1}^{r}\sum\limits_{j=1}^{r}\delta_{ij}\lambda_{i}\lambda_{j}\nonumber\\
    		& =\dfrac{\kappa d-1}{(d-1)}\sum\limits_{i=1}^{r}\sum\limits_{j=1}^{r}\lambda_{i}\lambda_{j}+(1+ \dfrac{1-\kappa}{d - 1})\sum\limits_{i=1}^{r}\lambda_{i}^{2}\nonumber\\
    		& =\dfrac{\kappa d-1}{(d-1)}\sum\limits_{i\neq j}\lambda_{i}\lambda_{j}+(1+\kappa)\sum\limits_{i=1}^{r}\lambda_{i}^{2}\nonumber\\
    		& =\dfrac{2(\kappa d-1)}{(d-1)}
    \sum\limits_{i<j}\lambda_{i}\lambda_{j}+1+\kappa.
    \end{align}

(2)  We now proceed to prove the general case by leveraging the previously established results. Observe that $\braket{\overline{\nu}_{ij}|\overline{\nu}_{i^{\prime}j^{\prime}}}
=\overline{\braket{\nu_{ij}|\nu_{i^{\prime}j^{\prime}}}}$ and $\braket{\nu_{ij}|\nu_{i^{\prime}j^{\prime}}}=\dfrac{\kappa d-1}{(d-1)}
\delta_{ii^{\prime}}\delta_{jj^{\prime}}+(1+ \dfrac{1-\kappa}{d - 1})
\delta_{ij}\delta_{i^{\prime}j^{\prime}}\in\mathbb{R}$. Therefore,  $\braket{\overline{\nu}_{ij}|\overline{\nu}_{i^{\prime}j^{\prime}}}=\braket{\nu_{ij}
|\nu_{i^{\prime}j^{\prime}}}=\braket{\mu_{ij}|\mu_{i^{\prime}j^{\prime}}}$. This implies the existence of a unitary operator $U$ on $\mathcal{H}_{A}\otimes \mathcal{H}_{B}$ such that $\ket{\overline{\nu}_{ij}}=U\ket{\mu_{ij}}$ for $i,j=1,2,\cdots,d$.  Accordingly, we obtain

\begin{align}\label{eq9}
\|\mathcal{J}(|\psi\rangle\langle\psi|)\|_{\mathrm{\rm Tr}} &= \left\|\sum_{i = 1}^{r} \sum_{j = 1}^{r} \lambda_{i} \lambda_{j}|\mu_{i j}\rangle\langle\overline{\nu}_{ij}|\right\|_{\mathrm{\rm Tr}} \nonumber\\
&= \left\|\left(\sum_{i = 1}^{r} \sum_{j = 1}^{r} \lambda_{i} \lambda_{j}|\mu_{i j}\rangle\langle\mu_{i j}|\right) U^{\dagger}\right\|_{\mathrm{\rm Tr}} \nonumber\\
&= \left\|\sum_{i = 1}^{r} \sum_{j = 1}^{r} \lambda_{i} \lambda_{j}|\mu_{i j}\rangle\langle\mu_{i j}|\right\|_{\mathrm{\rm Tr}} \nonumber
\\
&= \frac{2(\kappa d - 1)}{(d - 1)} \sum_{i<j} \lambda_{i} \lambda_{j}+1 + \kappa.
\end{align}
where the last equality is derived from (\ref{eq7}) and (\ref{eq8}). Given that $(C(\ket{\psi}))^{2}\geq \dfrac{8}{d(d-1)}\left(\sum\limits_{i<j}\lambda_{i}\lambda_{j}\right)^{2}$ \cite{Chen2005}, it follows that
$C(\ket{\psi})\geq 2\sqrt{\dfrac{2}{d(d-1)}}\sum\limits_{i<j}\lambda_{i}\lambda_{j}$,  This result, combined with (\ref{eq9}), leads to (\ref{eq5}).
\end{proof}

We give several examples below to illustrate the advantages of our results.

\begin{example}
We consider a state by mixing $\rho$ with white noise,
\begin{equation*}
    \rho_{p}=\dfrac{1-p}{9}I_{9}+p\rho.
\end{equation*}
	
    where
    $\rho$ is a $3\otimes 3$ PPT entangled state:
	\begin{equation*}
		\rho = \dfrac{1}{4}\left(I-\sum\limits_{i=0}^{4}\ket{\psi_{i}}\bra{\psi_{i}}\right),
	\end{equation*}
    \begin{align*}
    &\ket{\psi_{0}}=\dfrac{\ket{0}(\ket{0}-\ket{1})}{\sqrt{2}},\\ &\ket{\psi_{1}}=\dfrac{(\ket{0}-\ket{1})\ket{2}}{\sqrt{2}},\\ &\ket{\psi_{2}}=\dfrac{\ket{2}(\ket{1}-\ket{2})}{\sqrt{2}},\\  &\ket{\psi_{3}}=\dfrac{(\ket{1}-\ket{2})\ket{0}}{\sqrt{2}},\\ &\ket{\psi_{4}}=\dfrac{(\ket{0}+\ket{1}+\ket{3})(\ket{0}+\ket{1}+\ket{3})}{3}.
    \end{align*}

The complete sets of MUMs are constructed from the generalized Gell-Mann operators. Regarding the detailed construction process, one can refer to Ref.\cite{ka14}  the parameter $\kappa$ is given by $\kappa=\frac{1}{d}+t^{2}(1+\sqrt{d})^{2}(d-1),$ with $t\in[-0.10939, 0.122008]$

In Fig.\ref{fig:1}, we set $t$ as a variable and conducted a multi - dimensional comparative analysis with the results in \cite{shi2023family} and \cite{wang2025symmetric}. From the comparison data, it is clearly evident that our results are superior.
In Fig.\ref{fig:2}, when validating the results, we set the parameter $t=0. 12$ and compared it with the results presented in Ref.\cite{shi2023family} and \cite{wang2025symmetric}. Through this comparison, we clearly found that the results obtained from this study are superior to those in the references.
\begin{figure}[htp]
\centering
\includegraphics[width=0.50\textwidth]{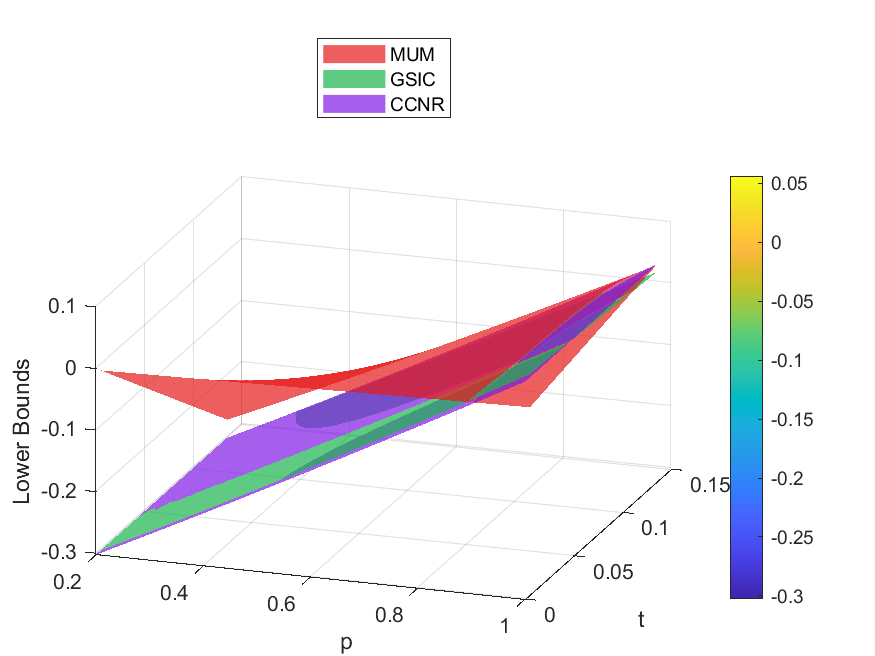}
\caption{$f_1$ from our result (red), $f_2$ from the Theorem $1$ in \cite{wang2025symmetric} (green) and $f_3
(x)$ from the Theorem 3 in \cite{shi2023family} (purple).}
\label{fig:1}
\end{figure}

\begin{figure}[htp]
\centering
\includegraphics[width=0.50\textwidth]{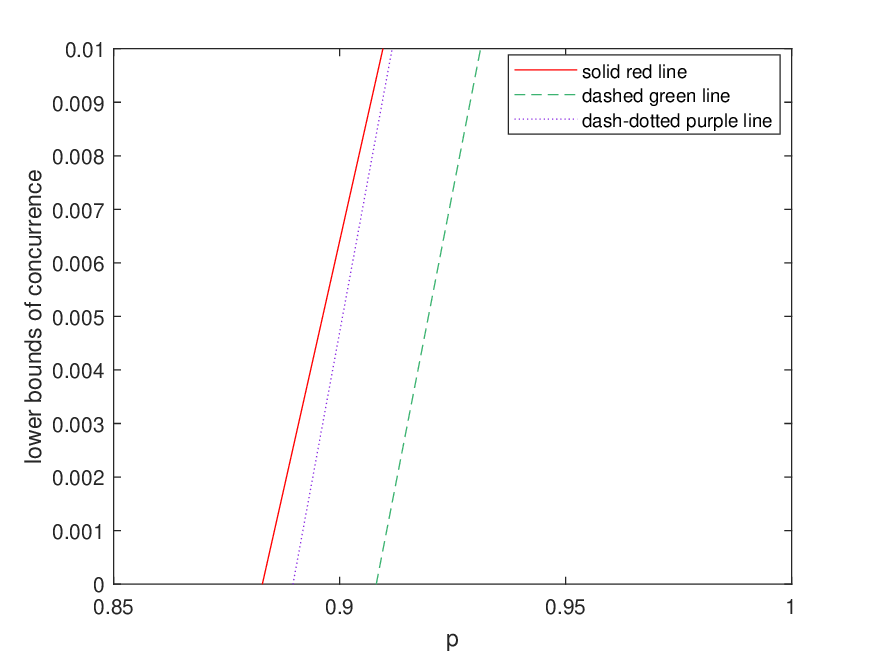}
\caption{$f_1$ from our result (solid red line), $f_2$ from the Theorem $1$ in \cite{wang2025symmetric} (dashed green line) and $f_3
(x)$ from the Theorem 3 in \cite{shi2023family} (dash-dotted purple line).}
\label{fig:2}
\end{figure}
    \end{example}

\begin{example}\label{Ex1}
	Consider the mixture of the bound entangled state proposed by Horodecki \cite{horodecki1997separability},
	\begin{equation*}
		\rho_{\upsilon}=\dfrac{1}{1+8\upsilon}
		\begin{pmatrix}
			\upsilon & 0 & 0 & 0 & \upsilon & 0 & 0 & 0 & \upsilon\\
			0 & \upsilon & 0 & 0 & 0 & 0 & 0 & 0 & 0\\
			0 & 0 & \upsilon & 0 & 0 & 0 & 0 & 0 & 0\\
			0 & 0 & 0 & \upsilon & 0 & 0 & 0 & 0 & 0\\
			\upsilon & 0 & 0 & 0 & \upsilon & 0 & 0 & 0 & \upsilon\\
			0 & 0 & 0 & 0 & 0 & \upsilon & 0 & 0 & 0\\
			0 & 0 & 0 & 0 & 0 & 0 & \frac{1+\upsilon}{2} & 0 & \frac{\sqrt{1-\upsilon^{2}}}{2}\\
			0 & 0 & 0 & 0 & 0 & 0 & 0 & \upsilon & 0\\
			\upsilon & 0 & 0 & 0 & \upsilon & 0 & \frac{\sqrt{1-\upsilon^{2}}}{2} & 0 & \frac{1+\upsilon}{2}
		\end{pmatrix}
	\end{equation*}
    and the $9\times 9$ identity matrix $I_{9}$,
    \begin{equation*}
    	\rho(\upsilon,q)=q\rho_{\upsilon}+\frac{1-q}{9}I_{9}.
    \end{equation*}
    Fig.\ref{fig:3} illustrate the lower bounds of $C(\rho(\upsilon,q))$ for $q=0.995$, we set $t$ as a variable and conducted a multi - dimensional comparative analysis with the results in \cite{lu2024separability} and \cite{wang2025symmetric}. From the comparison data, it is clearly evident that our results are superior.
In Fig.\ref{fig:4}, when validating the results, we set the parameter $t=0.08$ and compared it with the results presented in Ref.\cite{lu2024separability} and \cite{wang2025symmetric}. Through this comparison, we clearly found that the results obtained from this study are superior to those in the references.

\begin{figure}[htp]
\centering
\includegraphics[width=0.50\textwidth]{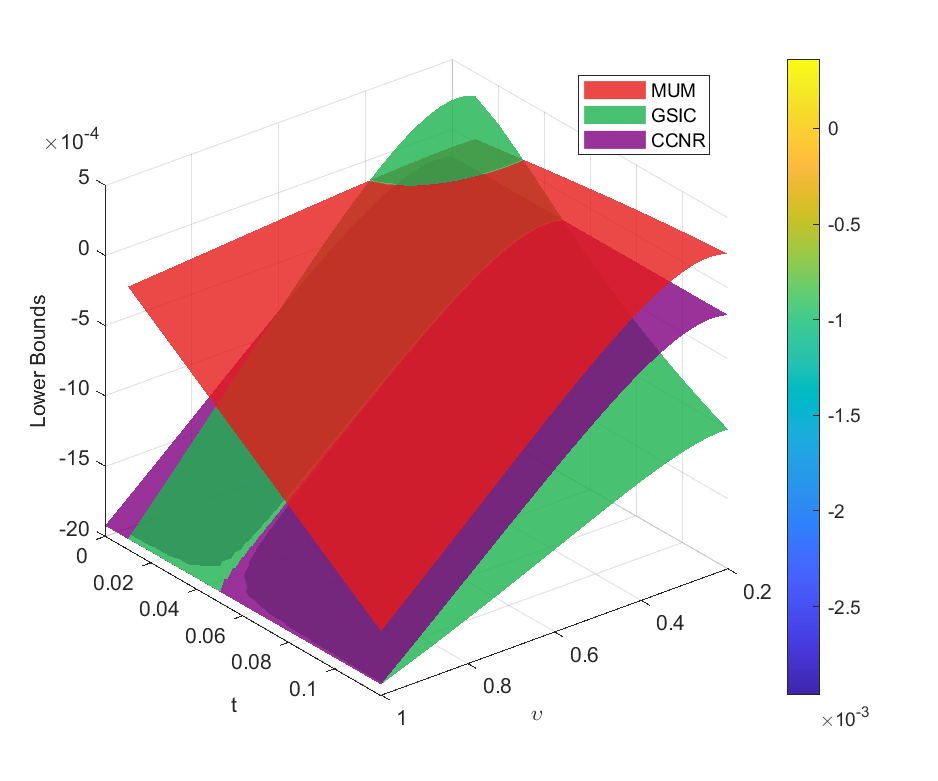}
\caption{$g_1$ from our result (red), $f_2$ from the Theorem $1$ in \cite{wang2025symmetric} (green) and $g_3
(x)$ from the Theorem 3 in \cite{shi2023family} (purple).}
\label{fig:3}
\end{figure}

 \begin{figure}[htp]
\centering
\includegraphics[width=0.50\textwidth]{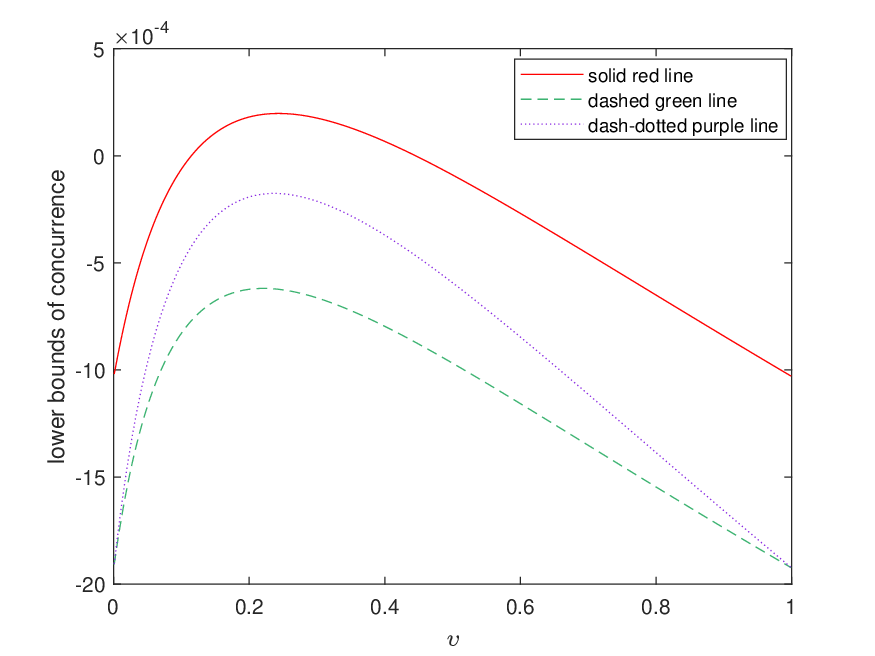}
\caption{$g_1$ from our result (solid red line),  $g_2
(x)$ from the Corollary $1$ in \cite{wang2025symmetric} (dashed green line) and $g_3
(x)$ from the Theorem $2$ in \cite{lu2024separability} (dash-dotted purple line).}
\label{fig:4}
\end{figure}

\end{example}
\section{Entanglement detection using MUMs}

Bipartite entanglement is a core concept in quantum information. It describes that the quantum state of a composite quantum system $\ket{\psi}\in \mathcal{H}_{A}\otimes \mathcal{H}_{B}$ cannot be represented as the direct product of the individual quantum states of the two subsystems. This entangled state exhibits a high degree of correlation between the two systems, such that a measurement on one system will instantaneously affect the state of the other system, even if they are separated by a large distance. In the following, we will present a criterion for bipartite entanglement from the perspective of the Schmidt rank.

\begin{theorem}\label{Th1}
For a bipartite pure state that we write in its Schmidt decomposition $\ket{\psi}=\sum\limits_{i=1}^{r}\lambda_{i}\ket{\varphi_i}\otimes\ket{\phi_{i}}$, we have
\begin{equation}
\|\mathcal{J}(\ket{\psi}\bra{\psi})\|_{\rm Tr} \leq 1 + \kappa,
\end{equation}
where the number $r$ is the Schmidt rank of the pure state, $\kappa$ is defined in Eq.(\ref{Ek}); it is the
rank of the reduced density matrix $\rho_A = \rm Tr_B(\ket{\psi}\bra{\psi})$.
\end{theorem}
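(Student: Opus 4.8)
The plan is to obtain the trace norm exactly, reusing the machinery already set up in the proof of Theorem~1, and then to read off the separability content encoded by the bound. First I would substitute the Schmidt decomposition $\ket{\psi}=\sum_{i=1}^{r}\lambda_{i}\ket{\varphi_i}\otimes\ket{\phi_i}$ into $\mathcal{J}(\ket{\psi}\bra{\psi})$ and recover the factorization $\mathcal{J}(\ket{\psi}\bra{\psi})=\sum_{i,j=1}^{r}\lambda_i\lambda_j\ket{\mu_{ij}}\bra{\overline{\nu}_{ij}}$. The Gram matrix of the $\ket{\mu_{ij}}$ is then computed from the conical $2$-design identity $\sum_{n,b}F_{n}^{(b)}\otimes F_{n}^{(b)}=(1+\frac{1-\kappa}{d-1})\mathbb{I}+\frac{\kappa d-1}{d-1}\mathbb{F}$, where the contraction $\braket{\omega_{n,b}|\omega_{n',b'}}=\delta_{nn'}\delta_{bb'}$ collapses the double index sum to a single one; this yields $\braket{\mu_{ij}|\mu_{i'j'}}=\frac{\kappa d-1}{d-1}\delta_{ii'}\delta_{jj'}+(1+\frac{1-\kappa}{d-1})\delta_{ij}\delta_{i'j'}$, and the identical Gram matrix for the $\ket{\overline{\nu}_{ij}}$. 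As in Theorem~1 this common Gram structure produces a unitary $U$ with $\ket{\overline{\nu}_{ij}}=U\ket{\mu_{ij}}$, so that unitary invariance of the trace norm reduces the evaluation to the positive-semidefinite representative $\sum_{i,j}\lambda_i\lambda_j\ket{\mu_{ij}}\bra{\mu_{ij}}$ and gives the exact value
\[
\|\mathcal{J}(\ket{\psi}\bra{\psi})\|_{\rm Tr}=1+\kappa+\frac{2(\kappa d-1)}{d-1}\sum_{i<j}\lambda_i\lambda_j.
\]

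With the exact value in hand I would isolate the regime in which the stated bound is operative. Because $\kappa>1/d$ forces $\kappa d-1>0$ and every $\lambda_i\ge 0$, the cross-term $\frac{2(\kappa d-1)}{d-1}\sum_{i<j}\lambda_i\lambda_j$ is non-negative and vanishes precisely when at most one Schmidt coefficient is nonzero, i.e.\ when $r=1$. Hence for a product (Schmidt-rank-one) state the value collapses to exactly $1+\kappa$, saturating the claimed inequality. The separability criterion associated with this bound then follows by convexity: for any separable $\rho=\sum_k p_k\ket{a_k}\bra{a_k}\otimes\ket{b_k}\bra{b_k}$ the linearity of $\mathcal{J}$ and the triangle inequality for the trace norm, combined with the product-state value, give $\|\mathcal{J}(\rho)\|_{\rm Tr}\le\sum_k p_k(1+\kappa)=1+\kappa$, so that any violation certifies entanglement.

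The main obstacle is the direction of the inequality relative to the exact evaluation. The computation shows $\|\mathcal{J}(\ket{\psi}\bra{\psi})\|_{\rm Tr}\ge 1+\kappa$ for \emph{every} pure state, with equality if and only if the state is a product state; consequently the bound $\le 1+\kappa$ is a characterization of product (separable) pure states rather than a property of an arbitrary pure state of Schmidt rank $r>1$. The delicate point is therefore the rigidity of the equality case: one must verify that the coincidence of the two Gram matrices genuinely yields the unitary $U$ (so that the trace norm equals the trace of a positive-semidefinite operator) and that the $2$-design identity is applied with exactly the index contraction enforced by the orthonormality of $\{\ket{\omega_{n,b}}\}$. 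Once the product-state equality $1+\kappa$ is secured, the extension to all separable states is the routine convexity step above.
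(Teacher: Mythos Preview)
Your approach is correct and follows the same backbone as the paper: both arguments reuse the exact evaluation $\|\mathcal{J}(\ket{\psi}\bra{\psi})\|_{\rm Tr}=1+\kappa+\tfrac{2(\kappa d-1)}{d-1}\sum_{i<j}\lambda_i\lambda_j$ established in the proof of Theorem~1 and then specialize to the product case. The paper inserts one extra intermediate step you skip: it invokes the Lagrange-multiplier bound $\bigl(\sum_i\lambda_i\bigr)^2\le r$ from \cite{terhal2000schmidt} to write $2\sum_{i<j}\lambda_i\lambda_j\le r-1$, obtaining the Schmidt-rank-dependent estimate $\|\mathcal{J}(\ket{\psi}\bra{\psi})\|_{\rm Tr}\le \tfrac{2(\kappa d-1)}{d-1}(r-1)+1+\kappa$, and only then sets $r=1$. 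Your route is more direct---you simply observe that the cross-term vanishes when $r=1$---and you correctly note that this yields \emph{equality} $1+\kappa$ at product states rather than merely an upper bound. Your convexity extension to arbitrary separable mixed states is not carried out in the paper's proof but is the natural companion statement. Your diagnosis in the final paragraph is also accurate: as literally stated for arbitrary Schmidt rank $r$, the inequality is false (the exact formula shows the reverse inequality), and the paper's own argument likewise only establishes the bound for $r=1$.
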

\begin{proof}
If we consider from the general two-body pure state perspective, according to the Schmidt decomposition $$\ket{\psi}=\sum\limits_{i=1}^{r}\lambda_{i}\ket{\varphi_i}\otimes\ket{\phi_{i}}, \sum_{i =1}^{r}\lambda_{i}^{2}=1, \lambda_{i} \geq 0.$$

In Ref.\cite{terhal2000schmidt}, by using Lagrange multipliers to implement the constraint $\sum_{i} \lambda_{i}^2 = 1$ one can show that $\left[\sum_{i = 1}^{r} \lambda_{i}\right]^{2} \leq r$, so
\begin{equation}\label{eq10}
2\sum\limits_{i<j}\lambda_{i}\lambda_{j} = \left[\sum_{i = 1}^{r} \lambda_{i}\right]^{2} - \sum_{i} \lambda_{i}^2 \leq r-1.
\end{equation}
Useing (\ref{eq5}) and (\ref{eq10}), we have
\begin{align*}
\|\mathcal{J}(\ket{\psi}\bra{\psi})\|_{\rm Tr} &= \dfrac{2(\kappa d-1)}{(d-1)}
    \sum\limits_{i<j}\lambda_{i}\lambda_{j}+1+\kappa \\
    &\leq   \frac{2(\kappa d - 1)}{(d - 1)} (r-1)+1 + \kappa
\end{align*}

 According to the Schmidt decomposition, if the Schmidt rank
$r=1$, then the quantum state $|\psi\rangle_{AB}$ is separableand can be expressed in the form of a product state; if the Schmidt rank $r > 1$, then the quantum state $|\psi\rangle_{AB}$ is entangled.

 If the quantum state $|\psi\rangle_{AB}$ is separableand, we have $r=1$, $$\|\mathcal{J}(\ket{\psi}\bra{\psi})\|_{\rm Tr} \leq 1 + \kappa$$
\end{proof}

By applying the Schmidt criterion, we successfully derived the separability criterion based on MUMs. Using the same derivation ideas and methods, the separability criterion based on symmetric informationally complete (SIC) was also successfully derived. For the detailed construction of the SIC, please refer to the Ref.\cite{wang2025symmetric}.
\begin{corollary}
  Let $\{E_{\alpha,k}|\alpha=1,\cdots,N; \,k=1,\cdots,M\}$ be a informationally complete $(N, M)$-POVM with the free parameter $x$ on the $d$ dimensional Hilbert space $\pazocal{H}$, $\rho$ be a bipartite state in $\pazocal{H}\otimes\pazocal{H}$. Using (\ref{eq10}) we have
  \begin{equation}\label{eq13}
	\|\pazocal{P}(\rho)\|_{\rm Tr}\leq \dfrac{(d-1)(xM^{2}+d^{2})}{dM(M-1)}
  \end{equation}
  \begin{proof}
  From Ref.\cite{wang2025symmetric}, we know that
  \begin{flalign}\label{12}
    	& \begin{array}{ll}
    		&\|\pazocal{P}(\ket{\psi}\bra{\psi})\|_{\rm Tr}
    		\hspace{-2mm} \\
    		& =\dfrac{(d-1)(xM^{2}+d^{2})}{dM(M-1)}+\dfrac{2(xM^{2}-d)}{M(M-1)}
    \sum\limits_{i<j}\lambda_{i}\lambda_{j}\\
    & \leq \dfrac{(d-1)(xM^{2}+d^{2})}{dM(M-1)}+\dfrac{2(xM^{2}-d)}{M(M-1)}
    (r-1).
    	\end{array}
    \end{flalign}

    Based on Eq.(\ref{eq10}), the inequality holds.When $\rho$ satisfies the separability condition, it can be deduced that $r = 1$, which enables Eq.(\ref{eq13}) to hold.
  \end{proof}
\end{corollary}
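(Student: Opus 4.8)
The plan is to reuse verbatim the structure already established for the MUMs case in Theorem~\ref{Th1}, only substituting the $(N,M)$-POVM analogue of the key identity. First I would recall from Ref.\cite{wang2025symmetric} the spectral identity for the symmetric $(N,M)$-POVM operators $E_{\alpha,k}$, namely the conical $2$-design property that yields $\sum_{\alpha,k}E_{\alpha,k}\otimes E_{\alpha,k}$ as an explicit linear combination of the identity $\mathbb{I}$ and the swap operator $\mathbb{F}$ with coefficients depending on $x$, $M$, $d$. Applying this to a pure state in its Schmidt decomposition $\ket{\psi}=\sum_{i=1}^{r}\lambda_i\ket{\varphi_i}\otimes\ket{\phi_i}$ reproduces, exactly as in Eq.(\ref{eq6})--(\ref{eq8}), the formula
\[
\|\pazocal{P}(\ket{\psi}\bra{\psi})\|_{\rm Tr}=\dfrac{(d-1)(xM^{2}+d^{2})}{dM(M-1)}+\dfrac{2(xM^{2}-d)}{M(M-1)}\sum_{i<j}\lambda_i\lambda_j,
\]
which is precisely Eq.(\ref{12}) and which the corollary cites directly, so this step costs nothing new.

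Next I would invoke the Terhal--Vollbrecht style bound already proved in Eq.(\ref{eq10}), $2\sum_{i<j}\lambda_i\lambda_j=\bigl(\sum_i\lambda_i\bigr)^2-\sum_i\lambda_i^2\le r-1$, and substitute it into the displayed equality above. Here one must check the sign of the coefficient $\tfrac{2(xM^{2}-d)}{M(M-1)}$: for an informationally complete $(N,M)$-POVM the parameter range guarantees $xM^{2}\ge d$ (equality only in the degenerate projective-measurement limit), so the coefficient is nonnegative and the inequality is preserved when we replace $\sum_{i<j}\lambda_i\lambda_j$ by its upper bound $(r-1)/2$. This gives
\[
\|\pazocal{P}(\ket{\psi}\bra{\psi})\|_{\rm Tr}\le\dfrac{(d-1)(xM^{2}+d^{2})}{dM(M-1)}+\dfrac{2(xM^{2}-d)}{M(M-1)}(r-1).
\]

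Finally, for a separable pure state the Schmidt rank is $r=1$, so the second term vanishes and we obtain $\|\pazocal{P}(\ket{\psi}\bra{\psi})\|_{\rm Tr}\le\tfrac{(d-1)(xM^{2}+d^{2})}{dM(M-1)}$. To pass to mixed states I would use convexity of the trace norm exactly as in the proof of Theorem~1: for any separable $\rho=\sum_i p_i\ket{\psi_i}\bra{\psi_i}$ with each $\ket{\psi_i}$ a product state, $\|\pazocal{P}(\rho)\|_{\rm Tr}\le\sum_i p_i\|\pazocal{P}(\ket{\psi_i}\bra{\psi_i})\|_{\rm Tr}\le\tfrac{(d-1)(xM^{2}+d^{2})}{dM(M-1)}$, since each term is bounded by the same constant and $\sum_i p_i=1$. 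I expect the only real obstacle to be bookkeeping: making sure the sign of $xM^{2}-d$ is correctly justified from the admissible parameter window of the $(N,M)$-POVM, since if that coefficient could be negative the direction of the inequality after substituting Eq.(\ref{eq10}) would flip; everything else is a direct transcription of the MUMs argument.
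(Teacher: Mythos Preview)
Your proposal is correct and follows essentially the same route as the paper's proof: cite the trace-norm formula for $\|\pazocal{P}(\ket{\psi}\bra{\psi})\|_{\rm Tr}$ from Ref.\cite{wang2025symmetric}, apply the Schmidt-coefficient bound of Eq.(\ref{eq10}), and set $r=1$ for separable states. You are in fact more careful than the paper in two places---checking the sign of $xM^{2}-d$ before substituting the upper bound, and extending from pure to mixed separable states via convexity of the trace norm---both of which the paper leaves implicit.
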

According to Theorem \ref{Th1}, we can conclude that once Eq.(\ref{eq9}) is not satisfied, the quantum state is an entangled state.  Next, we will present an intuitive calculation process and result through Example 3.

\begin{example}\label{ex:2}
We will continue to use the quantum states involved in Example \ref{Ex1} to further carry out in-depth relevant analyses.

Table \ref{tab:1} presents a comparison of the results obtained from our Theorem \ref{Th1}, the realignment criterion as stated in Theorem 1 of Ref.\cite{shi2023family}, and the results from Ref.\cite{sun2025separability} for various values of $\upsilon$. A careful examination of Table \ref{tab:1} reveals that, in terms of detecting the entanglement of the state $\rho_{p}$, our Theorem \ref{Th1} outperforms the criteria presented in \cite{shi2023family} and \cite{sun2025separability}.

\end{example}

\begin{center}
	\begin{table*}[htp!]
\caption{Entanglement of the state $\rho_{p}$ in Example \ref{ex:2} for different values of $\upsilon$}
		\label{tab:1}
		\begin{tabular}{cccc}
			\hline\noalign{\smallskip}
		$\upsilon$ \quad& Theorem $1$ in \cite{shi2023family} \quad&Theorem $1$ in \cite{sun2025separability} 
\quad & Our Theorem \ref{Th1}(t=0.01)\\
		\noalign{\smallskip}\hline\noalign{\smallskip}
		0.2 \quad& $0.9943 \leq p \leq 1$ \quad & $0.99408  \leq p \leq 1$ \quad& $0.994054  \leq p \leq 1$ \\
		
		0.4 \quad& $0.9948  \leq p \leq 1$ \quad& $0.99463  \leq p \leq 1$ \quad& $0.99461  \leq p \leq 1$ \\
		
		0.6 \quad& $0.9964  \leq p \leq 1$ \quad& $0.99627  \leq p \leq 1$ \quad& $0.99626 \leq p \leq 1$ \\
		
		0.8 \quad& $0.9982  \leq p \leq 1$ \quad& $0.998127  \leq p \leq 1$ \quad& $0.998123  \leq p \leq 1$ \\
		
		0.9 \quad& $0.9991  \leq p \leq 1$ \quad& $0.99906892  \leq p \leq 1$ \quad& $0.999067  \leq p \leq 1$ \\
		\noalign{\smallskip}\hline
		\end{tabular}
	\end{table*}
\end{center}

\section{Conclusions and discussions}

Based on MUMs, we propose a novel family of lower bounds for concurrence in quantum entanglement research, with their superiority in entanglement estimation verified through both theoretical and numerical analyses. Compared to conventional methods, this approach can more precisely capture finer entanglement features. Furthermore, for arbitrary $d$-dimensional bipartite systems, we establish  a separability criterion based on MUMs. Both theoretical proofs and numerical experiments confirm that the new criterion outperforms existing methods in detection capability, exhibiting higher sensitivity in identifying weakly entangled states and complex high-dimensional systems. These results provide more efficient tools for the quantitative characterization and experimental detection of quantum entanglement, with potential future extensions to multipartite entanglement analysis and quantum information task optimization.

\bigskip
\noindent{\bf Acknowlegements}
This work is supported by the National Natural Science Foundation of China under Grant 12171044, and the specific research fund of the Innovation Platform for Academicians of Hainan Province.


\begin{thebibliography}{00}

\bibitem{Horodecki2009} R. Horodecki, P. Horodecki, M. Horodecki and K. Horodecki, Rev. Mod. Phys. \textbf{81}, 865 (2009).

\bibitem{rungta2001universal} P. Rungta, V. Buzek, C. M. Caves, M. Hillery, and G. J. Milburn, Phys. Rev. A \textbf{64}, 042315 (2001).

\bibitem{peres1996separability} A. Peres, Phys. Rev. Lett. \textbf{77}, 1413 (1996).

\bibitem{horodecki1996necessary} M. Horodecki, P. Horodecki, and R. Horodecki, Phys. Lett. A \textbf{223}, 1 (1996).

\bibitem{chen2002matrix} K. Chen and L. A. Wu,  Quantum Inf. Comput. \textbf{3}, 193 (2003).

\bibitem{rudolph2005further} O. Rudolph,  Quantum Inf. Process. \textbf{4}, 219 (2005).

\bibitem{chen2005concurrence} K. Chen, S. Albeverio, and S. M. Fei,  Phys. Rev. Lett \textbf{95}, 040504 (2005).

\bibitem{wang2025symmetric} H. F. Wang and S. M. Fei, Phys. Rev. A \textbf{111}, 032432 (2025).

\bibitem{p96} A. Peres, Phys. Rev. Lett. \textbf{77}, 1413 (1996).

\bibitem{ho96} M. Horodecki, P. Horodecki, and R. Horodecki, Phys. Lett. A
\textbf{223}, 1 (1996).

\bibitem{ho97} P. Horodecki, Phys. Lett. A \textbf{232}, 333 (1997).

\bibitem{r1} O. Rudolph, Phys. Rev. A \textbf{67}, 032312 (2003).

\bibitem{r3} M. Horodecki, P. Horodecki, and R. Horodecki, Open Syst. Inf.
Dyn. \textbf{13}, 103 (2006).

\bibitem{r4} K. Chen and L. A. Wu, Phys. Lett. A \textbf{306}, 14 (2002); Phys. Rev.
A \textbf{69}, 022312 (2004); P. Wocjan and M. Horodecki, Open Syst.
Inf. Dyn. \textbf{12}, 331 (2005).

\bibitem{r5} S. Albeverio, K. Chen, and S. M. Fei, Phys. Rev. A \textbf{68}, 062313
(2003).

\bibitem{c1} O. G{\"u}hne, P. Hyllus, O. Gittsovich, and J. Eisert, Phys. Rev.
Lett. \textbf{99}, 130504 (2007).

\bibitem{c2} J. D. Vicente, Quant. Inf. Comput. \textbf{7}, 624 (2007).

\bibitem{c3} J. D. Vicente, J. Phys. A: Math. Theor. \textbf{41}, 065309 (2008).

\bibitem{ka14} A. Kalev and G. Gour, New J. Phys. \textbf{16}, 053038 (2014).

\bibitem{Chen2014} B. Chen, T. Ma and S. M. Fei, Phys. Rev. A \textbf{89}, 064302 (2014).

\bibitem{Gour2014} G. Gour and A. Kalev, J. Phys. A: Math. Theor. \textbf{47}, 335302 (2014).

\bibitem{Shang2018} J. Shang, A. Asadian, H. Zhu and O. G{\"u}hne, Phys. Rev. A \textbf{98}, 022309 (2018).

\bibitem{wang2018uncertainty} K. Wang, N. Wu and F. Song, Phys. Rev. A \textbf{98}, 032329 (2018).

\bibitem{Chen2005} K. Chen, S. Albeverio and S. M. Fei, Phys. Rev. Lett. \textbf{95}, 040504 (2005).

\bibitem{shi2023family} X. Shi and Y. Sun, Quantum Inf. Process. \textbf{22}, 131 (2023).

\bibitem{terhal2000schmidt} B. M. Terhal and P. Horodecki, Phys. Rev. A \textbf{61}, 040301 (2000).

\bibitem{horodecki1997separability} P. Horodecki, Phys. Lett. A \textbf{232}, 333 (1997).

\bibitem{sun2025separability} J. Sun, H. Yao, S. M. Fei and Z. Fan, Phys. Rev. A \textbf{111}, 012415 (2025).

\bibitem{lu2024separability} Y. Lu, Z. X. Shen, S. M. Fei and Z. X. Wang, arXiv:2412.04479 (2024).

\bibitem{Rungta2001} P. Rungta, V. Bu\v{z}ek, C. M. Caves, M. Hillery and G. J. Milburn, Phys. Rev. A \textbf{64}, 042315 (2001).

\bibitem{Guhne2009} O. G{\"u}hne and G. T{\'o}th, Phys. Rep. \textbf{474}, 1 (2009).

\bibitem{Wootters1989} W. K. Wootters and B. D. Fields, Ann. Phys. \textbf{191}, 363 (1989).

\bibitem{Zhao2023} J. Hao, H. Zhao, S. M. Fei, C. Xie and Z. X. Wang, Phys. Rev. E \textbf{107}, 054134 (2023).
\end{thebibliography}
\end{document}